\documentclass[11pt]{article}
\usepackage[T1]{fontenc}
\usepackage[utf8]{inputenc}
\usepackage{lmodern}
\usepackage{amsmath,amssymb,amsthm,mathtools,bm}
\usepackage{graphicx}
\usepackage{booktabs}
\usepackage{microtype}
\usepackage{geometry}
\usepackage{enumitem}
\usepackage[numbers,sort&compress]{natbib}
\usepackage[colorlinks=true,linkcolor=blue,citecolor=blue,urlcolor=blue]{hyperref}
\newtheorem{theorem}{Theorem}[section]
\newtheorem{proposition}[theorem]{Proposition}
\newtheorem{corollary}[theorem]{Corollary}

\newtheorem{remark}[theorem]{Remark}

\newcommand{\dd}{\,\mathrm{d}}
\newcommand{\Dx}{D_x}
\newcommand{\wpF}{\wp}

\title{Painlev\'e Integrability, Hamiltonian Structure\\and Exact Elliptic Reductions of a Generalized Nonlinear Wave Equation}
\author{Alvaro H. Salas\\
\small Department of Mathematics and Statistics\\
\small Universidad Nacional de Colombia, Sede Manizales, Colombia\\
\small FIZMAKO Research Group\\
\small \texttt{ahsalass@unal.edu.co} \quad ORCID: 0000-0001-9343-6062}
\date{}

\begin{document}
\maketitle

\begin{abstract}
We study the polynomial generalized Korteweg--de Vries family
\[
 u_t+P_m(u)u_x+\kappa u_{xxx}=0,\qquad \kappa\neq0,
\]
where $P_m$ is a real polynomial of degree $m\ge1$.  The purpose is not to generate isolated closed-form waves, but to identify the structural threshold at which pole-type Painlev\'e behavior and elliptic traveling-wave geometry are simultaneously lost.  A Weiss--Tabor--Carnevale dominant-balance calculation gives the universal principal exponent $p=-2/m$.  Hence only $m=1$ and $m=2$ can possess a principal Laurent branch with integer pole order; their resonance sets are respectively $\{-1,4,6\}$ and $\{-1,3,4\}$.  These two sectors are, after affine and Galilean transformations, the KdV and modified KdV equations, so their compatibility conditions and complete-integrability structures are inherited from the classical hierarchies.  In contrast, every degree $m\ge3$ has a fractional leading exponent and therefore fails the strong WTC pole criterion at the first step.

The whole polynomial family is nevertheless Hamiltonian with the Gardner bracket $D_x$.  Traveling-wave reduction yields a hyperelliptic curve $Y^2=R_{m+2}(U)$ whose generic genus is $\lfloor(m+1)/2\rfloor$.  Thus $m=1,2$ are precisely the nonlinear polynomial degrees for which the generic traveling-wave curve is elliptic, whereas the first Painlev\'e-obstructed case $m=3$ is also the first generic genus-two case.  For the elliptic sectors we derive a root-based Weierstrass representation valid for an arbitrary quartic first integral, characterize discriminant degenerations, and recover periodic, solitary, kink-type and rational limits.  A focusing mKdV periodic orbit is used for a reproducible numerical verification, with maximum pointwise error about $3.2\times10^{-12}$.  The results provide a concise bridge between singularity analysis, Hamiltonian form, algebraic-curve genus and exact nonlinear waves.
\end{abstract}

\noindent\textbf{Keywords:} Painlev\'e test; generalized Korteweg--de Vries equation; Hamiltonian PDE; elliptic function; Weierstrass function; hyperelliptic curve; traveling wave; integrability.

\noindent\textbf{MSC 2020:} 35Q53, 37K10, 37K20, 33E05, 34M55.

\section{Introduction}
The Korteweg--de Vries (KdV) and modified Korteweg--de Vries (mKdV) equations are canonical models in the theory of nonlinear dispersive waves and completely integrable systems \citep{korteweg1895,gardner1967,miura1968,ablowitz1981}.  Their importance is not tied to a single explicit soliton formula: they sit at the intersection of inverse scattering, Lax representations, Hamiltonian hierarchies, B\"acklund transformations, conservation laws and meromorphic singularity structure.  The Painlev\'e program of Ablowitz--Ramani--Segur and Weiss--Tabor--Carnevale (WTC) made this intersection especially transparent by connecting movable-singularity structure with integrable evolution equations \citep{ars1980a,ars1980b,weiss1983,weiss1983b,conte1999}.

A recurring generalization replaces the KdV characteristic speed by a nonlinear function of the field,
\begin{equation}
 u_t+P(u)u_x+\kappa u_{xxx}=0,\qquad \kappa\ne0,
 \label{eq:pgkdv}
\end{equation}
with $P$ polynomial.  Equation \eqref{eq:pgkdv} is Hamiltonian for every smooth $P$, and traveling waves are always reduced to an algebraic first integral.  These two facts alone, however, do not imply complete integrability.  In particular, an arbitrary polynomial characteristic speed generally changes both the movable-singularity structure of the PDE and the genus of the algebraic curve associated with a traveling wave.

Polynomial source terms added to KdV/mKdV have also been examined with Painlev\'e and exact-solution techniques \citep{kudryashov2014,kudryashov2016}, while broader KdV-type evolution classes have been organized by Lie-symmetry classification \citep{gungor2004}.  The family studied here is structurally different from a polynomial source perturbation: the polynomial remains inside the conservative characteristic speed $P(u)u_x$, so the $D_x$ Hamiltonian bracket survives for every degree.  This makes it possible to compare, within one fixed Hamiltonian class, the WTC pole threshold with the genus of the traveling-wave curve.

The aim of the present paper is to make that transition explicit.  Let
\begin{equation}
 P_m(u)=\sum_{j=0}^{m}a_j u^j,\qquad a_m\ne0,\qquad m\ge1.
 \label{eq:poly}
\end{equation}
The main structural observation is a sharp coincidence:
\begin{quote}
\emph{The nonlinear degrees $m=1,2$ are exactly the polynomial degrees for which the WTC dominant exponent is a negative integer and the generic traveling-wave curve is elliptic.  At $m=3$, the WTC exponent becomes fractional and the generic traveling-wave curve becomes genus two.}
\end{quote}
This is elementary to state, but it organizes several usually separate calculations into one classification principle.

The paper is deliberately conservative about the meaning of a failed Painlev\'e test.  Failure of the strong WTC pole criterion is used as an obstruction/diagnostic, not as a theorem excluding every possible notion of integrability.  Conversely, in the admissible $m=1,2$ sectors, complete integrability is established independently because affine and Galilean transformations reduce the equations to classical KdV or mKdV.

Our contributions are as follows.  First, we derive the degree-dependent WTC exponent and resonance polynomials.  Second, we exhibit a Hamiltonian formulation valid for the entire polynomial family and identify the affine normal forms of the integrable sectors.  Third, we connect polynomial degree to the genus of the traveling-wave curve.  Fourth, for the elliptic sectors we derive a root-based Weierstrass formula directly from an arbitrary quartic first integral and describe its degenerations.  Finally, we provide symbolic identities and a high-accuracy numerical verification that are included with the source package.

\section{Hamiltonian polynomial dispersive waves}
We work either on the real line with sufficiently rapid decay or on a periodic interval, so that integrations by parts generate no boundary terms.

\subsection{Conservation form and Hamiltonian functional}
Let $A'(u)=P_m(u)$ and $Q''(u)=P_m(u)$.  We choose
\begin{align}
 A(u)&=\sum_{j=0}^{m}\frac{a_j}{j+1}u^{j+1},\\
 Q(u)&=\sum_{j=0}^{m}\frac{a_j}{(j+1)(j+2)}u^{j+2}.
 \label{eq:AQ}
\end{align}
Then \eqref{eq:pgkdv} is the conservation law
\begin{equation}
 u_t+\partial_x\bigl(A(u)+\kappa u_{xx}\bigr)=0.
 \label{eq:conservation}
\end{equation}
In particular, the mass $M[u]=\int u\,\dd x$ is conserved.

\begin{proposition}[Canonical Hamiltonian form]
For every polynomial $P_m$, equation \eqref{eq:pgkdv} has the Hamiltonian representation
\begin{equation}
 u_t=\Dx\frac{\delta H}{\delta u},
 \qquad
 H[u]=\int\left(\frac{\kappa}{2}u_x^2-Q(u)\right)\dd x,
 \label{eq:ham}
\end{equation}
with Poisson operator $J_0=\Dx$.
\end{proposition}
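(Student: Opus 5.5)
The plan is a direct variational computation, followed by a check that $J_0=\Dx$ is a genuine Poisson operator. The key observation is that the two antiderivatives chosen in \eqref{eq:AQ} are related by $Q'(u)=A(u)$. Termwise differentiation gives $\frac{\dd}{\dd u}\frac{a_j}{(j+1)(j+2)}u^{j+2}=\frac{a_j}{j+1}u^{j+1}$, so this identity is immediate from the explicit formulas, and it lets us rewrite the flux in the conservation law \eqref{eq:conservation} in terms of $Q$.

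First, compute the variational derivative of $H$. Take a variation $u\mapsto u+\varepsilon v$ with $v$ in the same admissible class (decaying or periodic). Then
\begin{equation*}
\frac{\dd}{\dd\varepsilon}\Big|_{\varepsilon=0}H[u+\varepsilon v]=\int\bigl(\kappa u_x v_x-Q'(u)v\bigr)\dd x=\int\bigl(-\kappa u_{xx}-A(u)\bigr)v\,\dd x .
\end{equation*}
We integrate by parts once in the first term. Boundary terms vanish by the standing assumption at the start of this section. Hence $\delta H/\delta u=-\kappa u_{xx}-A(u)$.

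Next, apply $\Dx$ and use $A'(u)=P_m(u)$:
\begin{equation*}
\Dx\frac{\delta H}{\delta u}=-\kappa u_{xxx}-P_m(u)u_x .
\end{equation*}
This equals $u_t$ exactly when \eqref{eq:pgkdv} holds, or equivalently \eqref{eq:conservation}. That proves the Hamiltonian representation \eqref{eq:ham}.

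Finally, verify that $J_0=\Dx$ defines a Poisson bracket $\{F,G\}=\int \frac{\delta F}{\delta u}\Dx\frac{\delta G}{\delta u}\,\dd x$. Skew-symmetry follows from one integration by parts, since $\Dx^*=-\Dx$ on the admissible class. The Jacobi identity holds because $\Dx$ has constant coefficients, independent of $u$: the standard criterion, vanishing of the Schouten bracket $[J_0,J_0]$, is then trivially satisfied. Equivalently, one can cite that $\Dx$ is the Gardner bracket of KdV. I expect no real obstacle here. The only points requiring care are the sign conventions, where the minus sign in front of $Q$ is what produces the $+P_m(u)u_x$ term, and the justification of vanishing boundary terms, both of which are handled by the setting fixed at the beginning of the section.
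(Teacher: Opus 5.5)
Your proof is correct and follows essentially the paper's argument: compute $\delta H/\delta u=-\kappa u_{xx}-Q'(u)$ (your $Q'=A$ is the same thing) and apply $D_x$ using $Q''=A'=P_m$ to recover the equation. Your added check that $D_x$ is a skew-adjoint, constant-coefficient operator, so that skew-symmetry and the Jacobi identity hold, is something the paper leaves implicit; it is standard and correct.
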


\begin{proof}
The variational derivative is
\[
 \frac{\delta H}{\delta u}=-\kappa u_{xx}-Q'(u).
\]
Applying $D_x$ and using $Q''=P_m$ gives
\[
 \Dx\frac{\delta H}{\delta u}=-\kappa u_{xxx}-P_m(u)u_x,
\]
which is precisely \eqref{eq:pgkdv}.
\end{proof}

The quadratic quantity
\begin{equation}
 N[u]=\frac12\int u^2\,\dd x
 \label{eq:l2}
\end{equation}
is also conserved.  Indeed, $D_x(\delta N/\delta u)=u_x$, so $N$ is the momentum generating spatial translations for the Gardner bracket.  This observation is useful below: Hamiltonian structure persists even in the polynomial degrees that fail the Painlev\'e pole criterion, and therefore Hamiltonianity by itself does not distinguish the completely integrable sectors.

\section{Painlev\'e degree threshold}
We apply the WTC local singularity test near a noncharacteristic movable manifold
\[
 \phi(x,t)=0,\qquad \phi_x\ne0,
\]
and seek a principal behavior
\begin{equation}
 u\sim u_0\phi^p,\qquad p<0.
 \label{eq:dominant}
\end{equation}
The time derivative is less singular than the nonlinear convection/dispersion balance for the branches considered below.

\begin{theorem}[Universal dominant exponent]
Let $P_m$ have degree $m\ge1$.  Any dominant balance between $a_m u^m u_x$ and $\kappa u_{xxx}$ has exponent
\begin{equation}
 p=-\frac{2}{m}.
 \label{eq:p}
\end{equation}
Consequently, a principal Laurent branch with an integer pole order can occur only for $m=1$ or $m=2$.
\end{theorem}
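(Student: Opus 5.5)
We substitute the principal behavior \eqref{eq:dominant} into \eqref{eq:pgkdv} and compare the singular orders of the individual terms, treating $\phi$ as a noncharacteristic manifold with $\phi_x\neq0$. To leading order in $\phi$ we have $u_x\sim p\,u_0\phi_x\phi^{p-1}$, so the top convective term $a_m u^m u_x$ behaves like $a_m p\,u_0^{m+1}\phi_x\phi^{(m+1)p-1}$. The dispersive term behaves like
\[
\kappa u_{xxx}\sim \kappa\,p(p-1)(p-2)\,u_0\phi_x^3\phi^{p-3}.
\]
Equating the exponents gives $(m+1)p-1=p-3$, that is, $mp=-2$, and hence $p=-2/m$. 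The coefficient balance is then
\[
a_m u_0^{m}=-\kappa (p-1)(p-2)\phi_x^2 .
\]
We may cancel $p u_0\neq0$ here because $p<0$ and $u_0\neq0$. With $p=-2/m$ this becomes
\[
u_0^m=-\frac{\kappa\,2(m+1)(m+2)\phi_x^2}{a_m m^2}.
\]
This equation has at least one complex solution $u_0\neq0$, so the balance is actually realizable.

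Next we check that this balance is genuinely dominant. The lower-degree terms $a_j u^j u_x$ with $j<m$ have exponent $(j+1)p-1$. Since $p<0$, this is strictly larger than $(m+1)p-1$, so these terms are subdominant. The time derivative contributes $p\,u_0\phi_t\phi^{p-1}$. Its exponent exceeds $p-3$ by $2$, so it is also subdominant; this is consistent with the remark preceding the theorem. Conversely, any balance involving $\kappa u_{xxx}$ with $a_m u^m u_x$ must satisfy the exponent equation above, so the value $p=-2/m$ is forced. This gives uniqueness of the exponent.

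The consequence is then arithmetic. An integer pole order requires $-2/m\in\mathbb{Z}_{<0}$, that is, $m$ must divide $2$, so $m\in\{1,2\}$. These give pole orders $2$ and $1$ respectively.

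The only delicate point is justifying that no other dominant balance competes. One could ask, for instance, whether $u_t$ could balance $u^m u_x$ alone. That balance would require $mp=0$, which is impossible for $p<0$. The statement, however, restricts attention to balances between $a_m u^m u_x$ and $\kappa u_{xxx}$, so we only need to verify the subdominance claims above.
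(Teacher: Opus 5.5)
Your proof is correct and follows the paper's own argument: equate the singular exponents $(m+1)p-1$ and $p-3$ to get $mp=-2$, hence $p=-2/m$, and note that $-2/m$ is a negative integer only when $m\mid 2$, i.e.\ $m\in\{1,2\}$. Your extra checks go beyond the paper's short proof and are all sound; they make explicit what the paper only states informally around the theorem. These are the leading-coefficient relation, which the paper records separately as \eqref{eq:u0general}, and the subdominance of $u_t$ and of the lower-degree convective terms.
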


\begin{proof}
From \eqref{eq:dominant},
\[
 u^m u_x=O\bigl(\phi^{(m+1)p-1}\bigr),
 \qquad
 u_{xxx}=O\bigl(\phi^{p-3}\bigr).
\]
Equality of the dominant exponents yields
\[
 (m+1)p-1=p-3,
\]
which gives \eqref{eq:p}.  The number $-2/m$ is a negative integer only for $m=1,2$.
\end{proof}

The leading coefficient follows after retaining the powers of $\phi_x$:
\begin{equation}
 a_m u_0^m+\kappa(p-1)(p-2)\phi_x^2=0.
 \label{eq:u0general}
\end{equation}
Thus for the two pole-admissible nonlinear degrees,
\begin{align}
 m=1:&\quad p=-2,\qquad u_0=-\frac{12\kappa\phi_x^2}{a_1},
 \label{eq:u0kdv}\\
 m=2:&\quad p=-1,\qquad u_0^2=-\frac{6\kappa\phi_x^2}{a_2}.
 \label{eq:u0mkdv}
\end{align}
Complex-valued local branches are permitted in Painlev\'e analysis, so the sign in \eqref{eq:u0mkdv} is not restrictive.

\subsection{Resonances}
Set
\begin{equation}
 u=u_0\phi^p+\epsilon u_r\phi^{p+r}
 \label{eq:respert}
\end{equation}
and retain terms linear in $\epsilon$ at the dominant order.

\begin{proposition}[Resonance sets]
For $m=1$ the resonance polynomial is proportional to
\begin{equation}
 (r+1)(r-4)(r-6),
\end{equation}
and the resonances are $r=-1,4,6$.  For $m=2$ the resonance polynomial is proportional to
\begin{equation}
 (r+1)(r-3)(r-4),
\end{equation}
and the resonances are $r=-1,3,4$.
\end{proposition}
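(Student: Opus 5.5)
The plan is to substitute the perturbed expansion \eqref{eq:respert} into the dominant part of \eqref{eq:pgkdv}, namely $a_m u^m u_x+\kappa u_{xxx}$, and collect the terms linear in $\epsilon$ at order $\phi^{p+r-3}$. At this order only the leading-order derivatives of $\phi$ matter, so I will set $\phi_x$ as the only retained derivative; equivalently, I treat $\phi$ locally as $x-\psi(t)$ with $\phi_x=1$. Factors of $\phi_x$ can be restored by homogeneity. The time derivative and the lower-degree terms $a_j u^j u_x$ with $j<m$ are less singular, and I will drop them, consistent with the preceding theorem.

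The linearization has two pieces. The dispersive term contributes
\[
\kappa u_r (p+r)(p+r-1)(p+r-2)\phi^{p+r-3}.
\]
For the convective term, write $a_m u^m u_x=\frac{a_m}{m+1}\partial_x(u^{m+1})$. Its linearization is
\[
a_m\,\partial_x\bigl(u_0^m u_r\phi^{mp+p+r}\bigr)=a_m u_0^m u_r\,(mp+p+r)\,\phi^{mp+p+r-1}.
\]
Since $mp=-2$, this exponent equals $p+r-3$, as required. I then eliminate $a_m u_0^m$ using \eqref{eq:u0general}, with $\phi_x=1$: $a_m u_0^m=-\kappa(p-1)(p-2)$. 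This gives the resonance polynomial
\[
\kappa u_r\Bigl[(p+r)(p+r-1)(p+r-2)-(p-1)(p-2)(p+r-2)\Bigr].
\]
The common factor $(p+r-2)$ pulls out. The remaining bracket, $(p+r)(p+r-1)-(p-1)(p-2)$, is a quadratic in $r$, and it must vanish at $r=-1$.

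Now specialize. For $m=1$, $p=-2$: the factor is $(r-4)$, and the bracket is $(r-2)(r-3)-12=r^2-5r-6=(r+1)(r-6)$. This gives the resonances $-1,4,6$. For $m=2$, $p=-1$: the factor is $(r-3)$, and the bracket is $(r-1)(r-2)-6=r^2-3r-4=(r+1)(r-4)$. This gives $-1,3,4$.

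The only delicate step is justifying the exponent bookkeeping. I need to confirm that the $\epsilon$-linear terms of $u^m u_x$ all land at the same order $\phi^{p+r-3}$, and that no other term competes there. Writing the convective term in divergence form resolves the first point cleanly. For the second, the subdominant terms are $u_t$ (order $p+r-1$) and $a_j u^j u_x$ with $j<m$ (order $jp+p+r-1$). Both are strictly higher in $\phi$ than $p+r-3$, so they do not enter the resonance equation.
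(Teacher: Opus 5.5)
Your proposal is correct and follows essentially the same route as the paper's proof (Appendix A). In the gauge $\phi_x=1$ you linearize the dominant balance $a_mu^mu_x+\kappa u_{xxx}$, eliminate $a_mu_0^m$ via the leading-order relation, pull out the factor $(r-4)$ resp.\ $(r-3)$, and factor the remaining quadratic; the differences are cosmetic, namely that you first derive the uniform expression $\kappa u_r(p+r-2)\bigl[(p+r)(p+r-1)-(p-1)(p-2)\bigr]$ before specializing to $p=-2,-1$ (which makes the universal root $r=-1$ evident) and that you explicitly justify dropping the subdominant terms.
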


\begin{proof}
For $m=1$, $p=-2$.  Using \eqref{eq:u0kdv}, the coefficient of $u_r$ factors as
\[
 \kappa(r-4)\bigl[(r-2)(r-3)-12\bigr]
 =\kappa(r+1)(r-4)(r-6).
\]
For $m=2$, $p=-1$, and \eqref{eq:u0mkdv} gives
\[
 \kappa(r-3)\bigl[(r-1)(r-2)-6\bigr]
 =\kappa(r+1)(r-3)(r-4).
\]
\end{proof}

The universal resonance $r=-1$ corresponds to the arbitrariness of the singular manifold.  Compatibility at the positive resonances is guaranteed in the $m=1,2$ sectors by the transformations of the next section: the equations are point-equivalent to KdV or mKdV, both standard WTC examples \citep{weiss1983,weiss1983b}.

\begin{corollary}[Strong WTC obstruction for higher polynomial degree]
Every polynomial degree $m\ge3$ fails the strong WTC pole criterion at dominant order because $p=-2/m$ is noninteger.  In particular, the first genuinely cubic characteristic speed ($m=3$) produces $p=-2/3$ and a movable algebraic branch behavior rather than a Laurent pole branch.
\end{corollary}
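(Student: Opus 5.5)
The corollary follows from the Universal dominant exponent theorem once we check that no other balance can rescue a pole branch for $m\ge3$. The plan has three steps.

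\emph{Step 1: the dominant term balance.} Take $u\sim u_0\phi^p$ with $p<0$. The terms of \eqref{eq:pgkdv} then have exponents
\begin{itemize}[nosep]
\item $a_ju^ju_x \sim \phi^{(j+1)p-1}$ for $0\le j\le m$;
\item $u_{xxx}\sim\phi^{p-3}$;
\item $u_t\sim\phi^{p-1}$.
\end{itemize}
Because $p<0$, the nonlinear term $j=m$ is the most singular of the convective terms. Every lower-degree term $j<m$, and the time derivative, is strictly less singular than $u^mu_x$. The only remaining question is how $u^mu_x$ compares with $u_{xxx}$.

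\emph{Step 2: ruling out unbalanced leading terms.} If $(m+1)p-1<p-3$, that is $mp<-2$, then $a_mu^mu_x$ alone dominates. Its leading coefficient $a_mu_0^{m+1}p\,\phi_x$ would have to vanish, which forces $u_0=0$ because $a_m\ne0$, $p\ne0$ and the manifold is noncharacteristic. If instead $mp>-2$, then $\kappa u_{xxx}$ alone dominates. Its coefficient $\kappa u_0p(p-1)(p-2)\phi_x^3$ vanishes only if $p\in\{0,1,2\}$, which is impossible for $p<0$. So a nontrivial principal branch requires $mp=-2$, and the theorem gives $p=-2/m$.

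\emph{Step 3: the higher-degree cases.} For $m\ge3$ we have $-1<-2/m<0$, so $p$ is not an integer. A Laurent expansion $u=\phi^p\sum_k u_k\phi^k$ with integer pole order is therefore impossible, and the strong WTC criterion fails at the leading order, before any resonance analysis. For $m=3$ we get $p=-2/3$. Equation \eqref{eq:u0general} then gives
\[
u_0^3=-\kappa\left(-\tfrac53\right)\left(-\tfrac83\right)\frac{\phi_x^2}{a_3}=-\frac{40\kappa\phi_x^2}{9a_3}\ne0,
\]
so $u\sim u_0\phi^{-2/3}$ is a genuine movable algebraic branch point with three-sheeted local behavior.

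The main point needing care is Step 2: showing that no alternative balance produces an integer exponent. This includes balances involving $u_t$ or lower-order terms of $P_m$. The ordering in Step 1 settles it, since all such terms are strictly subdominant whenever $p<0$.
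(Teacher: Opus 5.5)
Your proof is correct and follows the paper's route: the corollary is read off from the Universal dominant exponent theorem, since $p=-2/m\in(-1,0)$ is noninteger for every $m\ge3$. Your Step 2 goes slightly further than the paper. It rules out unbalanced leading terms and shows that $u_t$ and the lower-degree convective terms are strictly subdominant for every $p<0$, which makes explicit what the paper only asserts, namely that the $a_mu^mu_x$/$\kappa u_{xxx}$ balance is the only possible pole balance. Your value $u_0^3=-40\kappa\phi_x^2/(9a_3)$ agrees with \eqref{eq:u0general}.
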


\begin{remark}
Failure of the WTC test is not used here as a logically sufficient proof of nonintegrability in every possible sense.  It is a sharp obstruction to the strong pole-type Painlev\'e property.  The positive integrability statement for $m=1,2$ comes instead from explicit equivalence to the classical integrable equations.
\end{remark}

\section{Affine normal forms of the integrable sectors}
\subsection{Linear characteristic speed: KdV sector}
Let
\[
 P_1(u)=a_0+a_1u,\qquad a_1\ne0.
\]
With the Galilean coordinate $X=x-a_0t$ and $u(x,t)=v(X,t)$, equation \eqref{eq:pgkdv} becomes
\begin{equation}
 v_t+a_1vv_X+\kappa v_{XXX}=0.
 \label{eq:kdvcoeff}
\end{equation}
A nonzero scaling of $X,t,v$ reduces \eqref{eq:kdvcoeff} to the standard KdV equation.  Hence it inherits the inverse-scattering, Lax, recursion-operator and bi-Hamiltonian structures of KdV \citep{gardner1967,miura1968,ablowitz1981,drazin1989}.

\subsection{Quadratic characteristic speed: mKdV/Gardner sector}
Let
\[
 P_2(u)=a_0+a_1u+a_2u^2,\qquad a_2\ne0.
\]
Define
\begin{equation}
 v=u+\frac{a_1}{2a_2},\qquad
 c_0=a_0-\frac{a_1^2}{4a_2}.
 \label{eq:shift}
\end{equation}
Then
\[
 P_2(u)=a_2v^2+c_0.
\]
Using $X=x-c_0t$ gives
\begin{equation}
 v_t+a_2v^2v_X+\kappa v_{XXX}=0,
 \label{eq:mkdvcoeff}
\end{equation}
which is a scaled mKdV equation.  Thus every quadratic polynomial characteristic speed is not merely Painlev\'e-admissible: it lies in the mKdV integrable class after an affine field shift and a Galilean transformation.  The commonly named Gardner form is therefore an affine representation of this same integrable sector.

\begin{theorem}[Polynomial Painlev\'e-integrable sectors]
Within the nonlinear polynomial family \eqref{eq:pgkdv}--\eqref{eq:poly}, the strong WTC pole criterion permits only $m=1,2$.  Both permitted sectors are completely integrable because they are point-equivalent to KdV or mKdV, respectively.
\end{theorem}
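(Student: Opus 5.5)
The proof splits into two parts, matching the two assertions of the theorem: an exclusion part, which uses only the dominant-balance computation, and an integrability part, which uses only the affine normal forms.

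\textbf{Exclusion for $m\ge3$.} By the Universal dominant exponent theorem, any principal branch balancing $a_m u^m u_x$ against $\kappa u_{xxx}$ has $p=-2/m$. This is a negative integer exactly when $m\mid 2$, that is, for $m=1,2$. The corollary on the strong WTC obstruction then rules out every $m\ge3$.

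One point needs justifying: this balance is the only admissible principal balance. I would argue it as follows.
\begin{itemize}
\item Take $u\sim u_0\phi^p$ with $p<0$ and $\phi_x\neq0$, as in \eqref{eq:dominant}.
\item The term $u_t$ has order $p-1$, and each lower term $a_j u^j u_x$ with $j<m$ has order $(j+1)p-1$.
\item All of these are strictly weaker than $a_m u^m u_x$, of order $(m+1)p-1$, because $p<0$.
\item The surviving candidates are therefore $a_m u^m u_x$ and $\kappa u_{xxx}$. If either one dominated alone, its leading coefficient would have to vanish. For $u_{xxx}$ this coefficient is $p(p-1)(p-2)u_0\phi_x^3$, which vanishes only for $p\in\{0,1,2\}$, all excluded by $p<0$. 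For $a_m u^m u_x$ it is $a_m p\,u_0^{m+1}\phi_x$, which is nonzero since $a_m\neq0$, $p\neq0$, $u_0\neq0$.
\end{itemize}
So the two terms must balance, which forces \eqref{eq:p}. In the admissible cases $m=1,2$ the leading coefficient \eqref{eq:u0general} has nonzero solutions $u_0$, given by \eqref{eq:u0kdv} and \eqref{eq:u0mkdv}. This settles the statement that the criterion \emph{permits only} $m=1,2$.

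\textbf{Complete integrability for $m=1,2$.} Here I would use the normal forms. For $m=1$, the Galilean change $X=x-a_0t$ gives \eqref{eq:kdvcoeff}. For $m=2$, the shift \eqref{eq:shift} followed by $X=x-c_0t$ gives \eqref{eq:mkdvcoeff}. It remains to make the scaling explicit. Setting $X=\alpha\xi$, $t=\beta\tau$, $v=\gamma w$, the coefficients become:
\begin{itemize}
\item \emph{KdV case:} the equation turns into $w_\tau+\frac{a_1\gamma\beta}{\alpha}ww_\xi+\frac{\kappa\beta}{\alpha^3}w_{\xi\xi\xi}=0$. Choosing $\beta=\alpha^3/\kappa$ and $\gamma=6\kappa/(a_1\alpha^2)$ gives the standard form $w_\tau+6ww_\xi+w_{\xi\xi\xi}=0$.
\item \emph{mKdV case:} the coefficient of $w^2w_\xi$ becomes $a_2\gamma^2\beta/\alpha$. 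With $\beta=\alpha^3/\kappa$ it equals $6\operatorname{sgn}(a_2\kappa)$ when $\gamma^2=6\kappa/(a_2\alpha^2)$ in absolute value. We take $\gamma$ real, and when $a_2\kappa<0$ we use $\gamma=\sqrt{-6\kappa/(a_2\alpha^2)}$. This gives focusing or defocusing mKdV, $w_\tau\pm6w^2w_\xi+w_{\xi\xi\xi}=0$, and both are completely integrable by inverse scattering.
\end{itemize}
The composite maps are invertible point transformations: a translation, a Galilean boost and nonzero dilations. They therefore carry Lax pairs, recursion operators, infinite hierarchies of conservation laws and the bi-Hamiltonian structure over to the original variables. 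Point equivalence also preserves the Painlev\'e property, so compatibility at the resonances $4,6$ and $3,4$ in the resonance-sets proposition follows from the known WTC analysis of KdV and mKdV \citep{weiss1983,weiss1983b}.

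\textbf{Expected main obstacle.} The only delicate step is the uniqueness of the dominant balance in the exclusion part, in particular whether the time derivative might enter. For $p<0$ the order $p-1$ of $u_t$ is strictly above $p-3$, so it cannot. The sign issue in the mKdV scaling is handled by allowing both signs, since both mKdV variants are integrable.
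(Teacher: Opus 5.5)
Your proposal is correct and follows the paper's route: you exclude $m\ge3$ via the dominant exponent $p=-2/m$ (Theorem 3.1 and the strong WTC corollary), and you obtain integrability for $m=1,2$ from the Galilean/affine normal forms of Section 4, which reduce the equation to KdV and mKdV. Your argument that $a_m u^m u_x$ against $\kappa u_{xxx}$ is the only possible principal balance, and your explicit scaling for the mKdV case that tracks the sign of $a_2\kappa$, make precise two steps that the paper only asserts.
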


\section{Traveling-wave Hamiltonian and algebraic curve}
Set
\begin{equation}
 u(x,t)=U(z),\qquad z=x-st,
 \label{eq:tw}
\end{equation}
where $s$ is the wave speed.  Substitution into \eqref{eq:pgkdv} gives
\[
 \bigl(P_m(U)-s\bigr)U'+\kappa U'''=0.
\]
After one integration,
\begin{equation}
 \kappa U''+A(U)-sU=C,
 \label{eq:tw2}
\end{equation}
where $A'=P_m$ and $C$ is constant.  Multiplication by $U'$ gives the first integral
\begin{equation}
 \frac{\kappa}{2}(U')^2+W_s(U)=E,
 \label{eq:energy}
\end{equation}
where
\begin{equation}
 W_s(U)=\sum_{j=0}^{m}\frac{a_j}{(j+1)(j+2)}U^{j+2}
 -\frac{s}{2}U^2-CU.
 \label{eq:potential}
\end{equation}
Equivalently,
\begin{equation}
 (U')^2=R_{m+2}(U):=\frac{2}{\kappa}\bigl(E-W_s(U)\bigr).
 \label{eq:curve}
\end{equation}
For generic coefficients $R_{m+2}$ has degree $m+2$ and is squarefree.

\begin{theorem}[Genus threshold for traveling waves]
Assume that $R_{m+2}$ is squarefree and of degree $m+2$.  The smooth projective model of
\begin{equation}
 Y^2=R_{m+2}(U)
 \label{eq:hypercurve}
\end{equation}
has genus
\begin{equation}
 g=\left\lfloor\frac{m+1}{2}\right\rfloor.
 \label{eq:genus}
\end{equation}
Consequently, $m=1,2$ give genus-one (elliptic) traveling-wave curves, while every $m\ge3$ gives generic genus $g\ge2$.
\end{theorem}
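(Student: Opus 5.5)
The plan is to invoke the standard genus formula for hyperelliptic curves. For a squarefree polynomial $R$ of degree $d\ge 1$ over $\mathbb{C}$, the smooth projective model of $Y^2=R(U)$ is a double cover of $\mathbb{P}^1$ via $(U,Y)\mapsto U$. We will compute its genus by Riemann--Hurwitz, $2g-2=2(0-2)+\#\{\text{branch points}\}$, which reduces to $g=(b-2)/2$ with $b$ the number of branch points. The finite branch points are exactly the $d$ distinct roots of $R$. Squarefreeness guarantees that these are simple zeros, so the local model near each root is $Y^2\sim cw$ with $c\ne0$ and ramification index $2$, and the affine curve is smooth there.

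The only delicate step is the point(s) over $U=\infty$, which we handle by parity. Set $U=1/w$ and $Y=\eta/w^{\lceil d/2\rceil}$, so the curve becomes $\eta^2=w^{2\lceil d/2\rceil}R(1/w)=\tilde R(w)$. If $d$ is even, $\tilde R(0)$ equals the nonzero leading coefficient, so there are two unramified points over $\infty$ and $b=d$. If $d$ is odd, $\tilde R$ has a simple zero at $w=0$, so $\infty$ is a branch point and $b=d+1$. In both cases $b=2\lceil d/2\rceil$, and therefore $g=\lceil d/2\rceil-1=\lfloor (d-1)/2\rfloor$. Equivalently, one can normalize the curve by gluing the two affine charts (the usual hyperelliptic model) and observe that the chart at infinity is smooth for the same reasons.

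To finish, substitute $d=m+2$, which is the degree in \eqref{eq:curve}: the leading coefficient of $R_{m+2}$ is $-\tfrac{2}{\kappa}\,\tfrac{a_m}{(m+1)(m+2)}\ne0$. This gives $g=\lfloor (m+1)/2\rfloor$, which is \eqref{eq:genus}. For $m=1$ and $m=2$ (degrees $3$ and $4$) we get $g=1$, and for $m\ge3$ we get $g\ge\lfloor 4/2\rfloor=2$. Genericity of squarefreeness is already granted in the statement's hypothesis, so no discriminant argument is needed beyond noting that the discriminant of $R_{m+2}$ is a nonzero polynomial in $(E,C,s,a_j)$; for the theorem as stated it suffices to assume squarefreeness.

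The main obstacle is the careful treatment of the point(s) at infinity, namely the parity split and the correct weighted chart. Everything else is routine Riemann--Hurwitz bookkeeping.
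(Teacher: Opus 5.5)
Your proposal is correct and follows essentially the same route as the paper, which simply cites the standard fact that $Y^2=R_d(U)$ with $R_d$ squarefree of degree $d\ge3$ has genus $\lfloor(d-1)/2\rfloor$ and then sets $d=m+2$. The only difference is that you derive that cited formula yourself via Riemann--Hurwitz, including the parity analysis at $U=\infty$: one branch point there when $d$ is odd, and two unramified points when $d$ is even.
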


\begin{proof}
A nonsingular hyperelliptic curve $Y^2=R_d(U)$ with squarefree polynomial $R_d$ of degree $d\ge3$ has genus $\lfloor(d-1)/2\rfloor$ \citep{farkas1992}.  Taking $d=m+2$ yields \eqref{eq:genus}.
\end{proof}

Combining Theorems 3.1 and 5.1 gives the central correspondence.

\begin{corollary}[Coincidence of Painlev\'e and elliptic thresholds]
For nonlinear polynomial characteristic speeds, the degrees that admit an integer WTC principal pole exponent are exactly the degrees whose generic traveling-wave curves are elliptic:
\[
 m\in\{1,2\}
 \quad\Longleftrightarrow\quad
 p=-2/m\in\mathbb Z_{<0}
 \quad\Longleftrightarrow\quad
 g=1.
\]
The first failure, $m=3$, simultaneously gives $p=-2/3$ and generic genus $g=2$.
\end{corollary}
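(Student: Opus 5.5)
The plan is to combine Theorem 3.1 and Theorem 5.1 and check each biconditional separately over the integers $m\ge1$. The corollary is a statement about integers. The only real content is two elementary number-theoretic facts about the maps $m\mapsto -2/m$ and $m\mapsto\lfloor (m+1)/2\rfloor$ on $\mathbb Z_{\ge1}$.

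First, I will establish $m\in\{1,2\}\Leftrightarrow p=-2/m\in\mathbb Z_{<0}$. By Theorem 3.1 the dominant exponent is $p=-2/m$, and it is always negative. So the condition is that $m$ divides $2$. Since $m\ge1$, this holds exactly for $m=1$, giving $p=-2$, and $m=2$, giving $p=-1$. For $m\ge3$ we have $0<2/m<1$, so $p\in(-1,0)$ is not an integer. In particular $m=3$ gives $p=-2/3$.

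Second, I will establish $m\in\{1,2\}\Leftrightarrow g=1$ for the generic curve. Theorem 5.1 applies under the hypothesis that $R_{m+2}$ is squarefree of degree $m+2$. I will briefly justify that this is the generic situation. The leading coefficient of $R_{m+2}$ is $-2a_m/(\kappa(m+1)(m+2))\neq0$. The discriminant, viewed as a polynomial in the free constants $(E,C,s)$, is not identically zero: varying $E$ shifts the constant term, and for all but finitely many $E$ the roots are simple. So squarefreeness holds off a proper algebraic subset of parameter space. Then $g=\lfloor (m+1)/2\rfloor$. This equals $1$ iff $1\le (m+1)/2<2$, i.e. $m\in\{1,2\}$. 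It is monotone nondecreasing in $m$, and at $m=3$ it equals $2$, so $g\ge2$ for all $m\ge3$.

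Chaining the two equivalences through the common condition $m\in\{1,2\}$ gives the displayed statement. The $m=3$ values $p=-2/3$ and $g=2$ were computed above. The only step requiring any care is the genericity claim for squarefreeness. I expect that to be the main, though mild, obstacle, and I will handle it with the discriminant argument in $E$ just described. I will also remark that in the nongeneric, non-squarefree case the genus drops, which is why the corollary is phrased for generic curves.
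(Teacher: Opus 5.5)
Your proposal is correct and follows the paper's own route, which likewise obtains the corollary by combining Theorem 3.1 ($p=-2/m$) with Theorem 5.1 ($g=\lfloor (m+1)/2\rfloor$) and checking the elementary integer conditions. Your added justification that squarefreeness is generic (only finitely many critical values of $W_s$ for $E$ to avoid) fills in a point the paper only asserts, but it does not change the argument.
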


\begin{remark}
For $m\ge3$, repeated roots of $R_{m+2}$ can lower the geometric genus.  Therefore special elliptic or elementary traveling waves may still occur on lower-dimensional parameter sets even though the generic curve is hyperelliptic and the PDE fails the strong WTC criterion.  The distinction between generic structure and degenerate exact solutions is essential.
\end{remark}

\section{Exact Weierstrass reduction in the elliptic sectors}
For $m=1,2$, equation \eqref{eq:curve} has degree at most four.  We now give a representation that does not require first factoring the quartic completely.

\begin{theorem}[Root-based Weierstrass formula]
Let
\begin{equation}
 (U')^2=R_4(U)
 \label{eq:quartic}
\end{equation}
with $R_4$ a quartic polynomial, and let $U_0$ be a simple root of $R_4$.  Define
\begin{align}
 g_2&=\frac{R_4''(U_0)^2}{48}
      -\frac{R_4'(U_0)R_4'''(U_0)}{24},
 \label{eq:g2}\\
 g_3&=-\frac{R_4''(U_0)^3}{1728}
      +\frac{R_4'(U_0)R_4''(U_0)R_4'''(U_0)}{576}
      -\frac{R_4'(U_0)^2R_4''''(U_0)}{384}.
 \label{eq:g3}
\end{align}
Then every nonconstant local solution represented from that simple root can be written
\begin{equation}
 U(z)=U_0+
 \frac{R_4'(U_0)}{4\left[\wpF(z-z_0;g_2,g_3)-R_4''(U_0)/24\right]},
 \label{eq:weier}
\end{equation}
where $z_0$ is a phase constant.
\end{theorem}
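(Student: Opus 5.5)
Taylor expand $R_4$ about the simple root $U_0$ and remove the root by a Möbius (reciprocal) change of variable. Write $U=U_0+1/V$ and
\[
R_4(U)=b_1(U-U_0)+b_2(U-U_0)^2+b_3(U-U_0)^3+b_4(U-U_0)^4,
\]
with $b_1=R_4'(U_0)\neq0$, $b_2=R_4''(U_0)/2$, $b_3=R_4'''(U_0)/6$ and $b_4=R_4''''(U_0)/24$. Then $U'=-V'/V^2$, so $(V')^2=V^4R_4(U_0+1/V)$. This gives
\[
(V')^2=b_1V^3+b_2V^2+b_3V+b_4,
\]
a cubic in $V$. The simple root $U_0$ was sent to $V=\infty$, and that is exactly why the degree drops.

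Next, normalize the cubic to Weierstrass form by an affine map. Setting $V=(4/b_1)(W+\beta)$ multiplies the left side by $16/b_1^2$. After dividing, the leading coefficient becomes $4$. Choosing $\beta=-b_2/12$ kills the $W^2$ term, leaving
\[
(W')^2=4W^3-g_2W-g_3 .
\]
The next step is to expand and read off the coefficients:
\[
g_2=\frac{b_2^2}{12}-\frac{b_1b_3}{4},\qquad
g_3=\frac{b_1b_2b_3}{48}-\frac{b_2^3}{216}-\frac{b_1^2b_4}{16}.
\]
Substituting the Taylor coefficients should reproduce \eqref{eq:g2}--\eqref{eq:g3}. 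For instance, $b_2^2/12=R_4''^2/48$ and $b_1b_3/4=R_4'R_4'''/24$. Likewise $b_2^3/216=R_4''^3/1728$, $b_1b_2b_3/48=R_4'R_4''R_4'''/576$ and $b_1^2b_4/16=R_4'^2R_4''''/384$. This bookkeeping is the main (purely computational) obstacle. I would organize it so that only the substitution $V\mapsto W$ is expanded, since the remaining coefficient matching is linear in each $b_j$. The shift is $\beta=-b_2/12=-R_4''(U_0)/24$, which produces the denominator in \eqref{eq:weier}. Indeed, $U-U_0=1/V=b_1/(4(W-R_4''(U_0)/24))$.

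Finally, the solution theory of the Weierstrass equation finishes the argument. Differentiate $(W')^2=4W^3-g_2W-g_3$: any nonconstant local solution satisfies $W''=6W^2-g_2/2$. By uniqueness for this second-order ODE, combined with the fact that $\wp(z-z_0)$ attains any admissible value pair $(W,W')$ on the curve for a suitable $z_0$ (surjectivity of $\wp$ onto the cubic curve, including degenerate discriminant cases via the limiting trigonometric/rational forms), we get $W=\wp(z-z_0;g_2,g_3)$. Constant solutions are excluded by hypothesis. The phrase ``represented from that simple root'' covers the branch where $V$ stays finite, i.e. $U\neq U_0$ away from isolated points; points where $U=U_0$ correspond to poles of $\wp$, so the formula extends by continuity. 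Undoing the substitutions yields \eqref{eq:weier}.
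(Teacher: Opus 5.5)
Your proposal is correct and follows the paper's proof essentially verbatim: the reciprocal substitution $U=U_0+1/V$ at the simple root, the affine map $V=(4W-b_2/3)/b_1$ (the paper's $y=(4w-B/3)/A$) and the coefficient matching all agree with the appendix computation, and your intermediate formulas for $g_2,g_3$ are right. Your closing argument, via uniqueness for $W''=6W^2-g_2/2$ and the parametrization of the curve by $(\wp,\wp')$, only makes explicit the step the paper compresses into ``hence $w=\wp(z-z_0;g_2,g_3)$''.
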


\begin{proof}
Put $U=U_0+1/y$.  Since $R_4(U_0)=0$, multiplying \eqref{eq:quartic} by $y^4$ yields
\begin{equation}
 (y')^2=Ay^3+By^2+Cy+D,
 \label{eq:cubicY}
\end{equation}
where
\[
 A=R_4'(U_0),\quad
 B=\frac{R_4''(U_0)}{2},\quad
 C=\frac{R_4'''(U_0)}{6},\quad
 D=\frac{R_4''''(U_0)}{24}.
\]
The affine change
\[
 y=\frac{4w-B/3}{A}
\]
reduces \eqref{eq:cubicY} to
\[
 (w')^2=4w^3-g_2w-g_3,
\]
with \eqref{eq:g2}--\eqref{eq:g3}.  Hence $w=\wpF(z-z_0;g_2,g_3)$, and solving backward for $U$ gives \eqref{eq:weier}.
\end{proof}

The elliptic discriminant is
\begin{equation}
 \Delta=g_2^3-27g_3^2.
 \label{eq:disc}
\end{equation}
When $\Delta\ne0$, the Weierstrass cubic is nonsingular.  When $\Delta=0$, the elliptic function degenerates to trigonometric, hyperbolic or rational functions, in agreement with repeated-root degenerations of the traveling-wave polynomial \citep{whittaker1927,lawden1989,byrd1971}.

\subsection{Real-orbit classification}
The real phase portrait follows directly from \eqref{eq:quartic}.  If $\alpha<\beta$ are consecutive simple real zeros with $R_4(U)>0$ on $(\alpha,\beta)$, then the corresponding orbit is periodic and has period
\begin{equation}
 T=2\int_{\alpha}^{\beta}\frac{\dd U}{\sqrt{R_4(U)}}.
 \label{eq:period}
\end{equation}
If one turning point coalesces into a double root, the period diverges and the periodic orbit approaches a homoclinic solitary wave.  Two suitable double roots generate a heteroclinic kink/antikink limit, while higher multiplicities produce rational degenerations.  These statements are consequences of the local integrability of $\dd U/\sqrt{R_4(U)}$ at simple roots and its logarithmic divergence at a double root.

\section{Example: periodic and solitary mKdV waves}
Consider the normalized focusing mKdV equation
\begin{equation}
 u_t+6u^2u_x+u_{xxx}=0.
 \label{eq:fmkdv}
\end{equation}
For a traveling wave of speed $s$ and zero first integration constant, \eqref{eq:tw2} becomes
\begin{equation}
 U''-sU+2U^3=0.
 \label{eq:mkdvODE}
\end{equation}
The energy law is
\begin{equation}
 (U')^2=sU^2-U^4+2E.
 \label{eq:mkdvenergy}
\end{equation}

\subsection{A nonsingular elliptic orbit}
Take $s=2$ and $E=-0.32=-8/25$.  Then
\begin{equation}
 (U')^2=-U^4+2U^2-\frac{16}{25}
 =\left(\frac85-U^2\right)\left(U^2-\frac25\right).
 \label{eq:examplequartic}
\end{equation}
The positive periodic orbit oscillates between
\[
 U_{\min}=\sqrt{\frac25},\qquad
 U_{\max}=\sqrt{\frac85}.
\]
It has the Jacobi representation
\begin{equation}
 U(z)=\sqrt{\frac85}\,
 \operatorname{dn}\left(\sqrt{\frac85}\,z\,\middle|\,\frac34\right),
 \label{eq:dn}
\end{equation}
with period
\begin{equation}
 T=\frac{2K(3/4)}{\sqrt{8/5}}
 =3.409750627945834\ldots .
 \label{eq:periodnum}
\end{equation}
For the simple root $U_0=\sqrt{8/5}$, formula \eqref{eq:weier} gives
\begin{equation}
 g_2=\frac{73}{75},\qquad
 g_3=\frac{119}{675},\qquad
 \Delta=\frac{1296}{15625}>0.
 \label{eq:invariantsExample}
\end{equation}
Thus the orbit is genuinely elliptic rather than a degenerate elementary wave.

Figure \ref{fig:potential} shows the double-well traveling-wave potential and the turning points at $E=-0.32$.
\begin{figure}[htbp]
\centering
\includegraphics[width=0.78\linewidth]{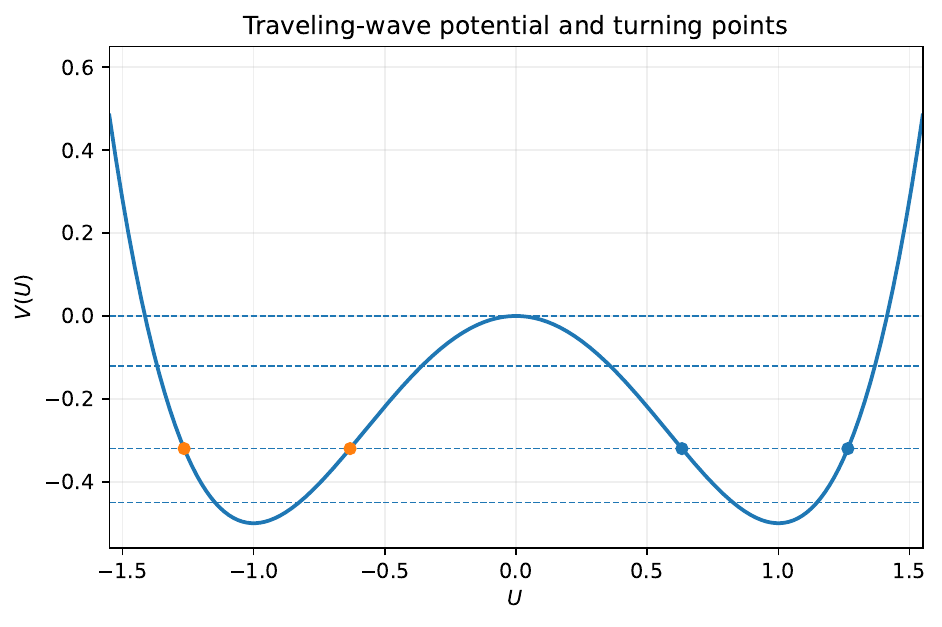}
\caption{Traveling-wave potential $V(U)=-U^2+\tfrac12U^4$ for $s=2$.  The marked turning points correspond to the periodic energy $E=-0.32$.}
\label{fig:potential}
\end{figure}

\subsection{Degenerate solitary-wave limit}
At $E=0$, \eqref{eq:mkdvenergy} becomes
\[
 (U')^2=U^2(s-U^2).
\]
For $s>0$ the double root $U=0$ produces the homoclinic pulse
\begin{equation}
 U(z)=\sqrt{s}\,\operatorname{sech}(\sqrt{s}\,z),
 \label{eq:sech}
\end{equation}
which is the $\Delta\to0$ degeneration of the elliptic family.  This explicitly illustrates the discriminant interpretation in Section 6.

\section{Reproducible numerical verification}
To verify the closed form without relying on symbolic simplification alone, we integrate \eqref{eq:mkdvODE} for $s=2$ with initial data
\[
 U(0)=\sqrt{8/5},\qquad U'(0)=0,
\]
over three periods using an eighth-order Dormand--Prince method at relative tolerance $2\times10^{-12}$ and absolute tolerance $2\times10^{-13}$.  The exact reference is \eqref{eq:dn}.  Figure \ref{fig:compare} shows that the two curves are visually indistinguishable, while Figure \ref{fig:error} resolves the pointwise difference.  The maximum absolute error in the supplied run is
\begin{equation}
 \max |U_{\mathrm{num}}-U_{\mathrm{exact}}|=3.12\times10^{-12}.
 \label{eq:maxerror}
\end{equation}

\begin{figure}[htbp]
\centering
\includegraphics[width=0.78\linewidth]{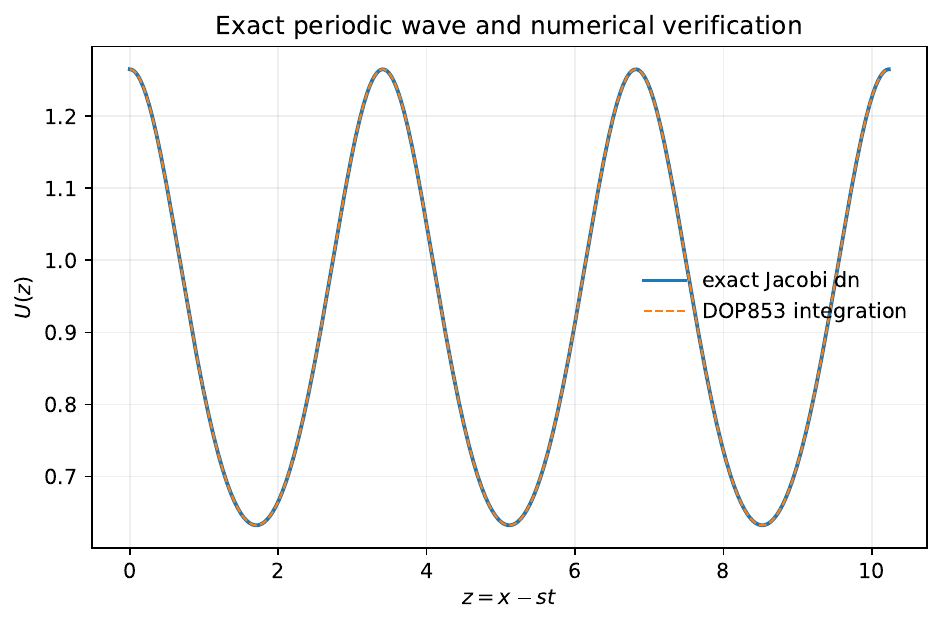}
\caption{Jacobi-$\operatorname{dn}$ solution \eqref{eq:dn} and direct numerical integration of the traveling-wave equation over three periods.}
\label{fig:compare}
\end{figure}

\begin{figure}[htbp]
\centering
\includegraphics[width=0.78\linewidth]{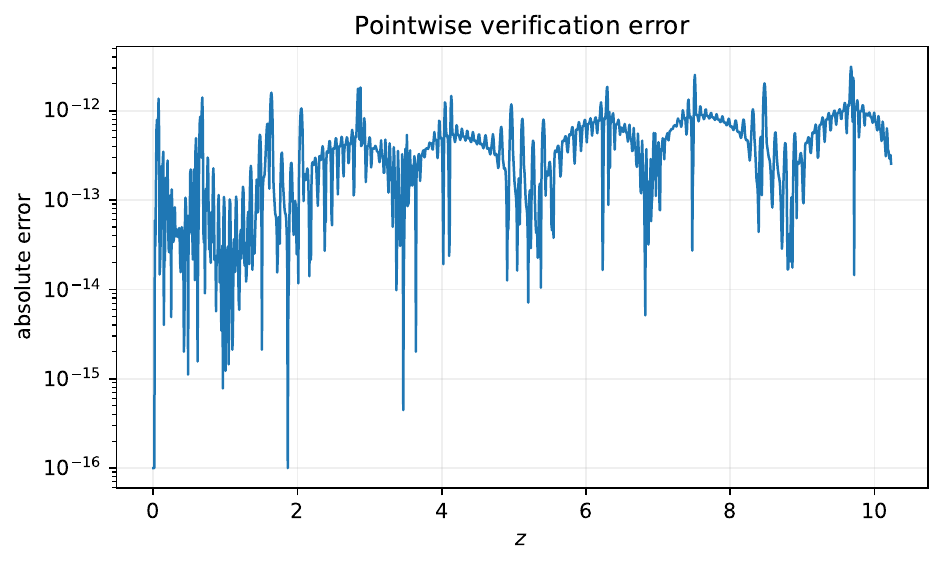}
\caption{Absolute pointwise difference between the exact periodic wave and the numerical IVP solution.}
\label{fig:error}
\end{figure}

The file \texttt{generate\_figures.py} regenerates all three figures and the error value.  The accompanying symbolic audit checks the affine reduction, the first integral, the quartic-to-Weierstrass transformation and the invariants in \eqref{eq:invariantsExample}.

\section{What changes beyond the elliptic threshold?}
For $m=3$, the traveling-wave equation is generically
\[
 (U')^2=R_5(U),
\]
so the quadrature lives on a genus-two hyperelliptic curve.  Simultaneously, the WTC leading exponent is $p=-2/3$.  These two transitions are structurally different manifestations of the same increase in polynomial nonlinearity: the singularity expansion ceases to be a Laurent pole expansion, while the traveling-wave inversion ceases to be generically elliptic.

For higher $m$ the genus increases according to \eqref{eq:genus}, but the WTC obstruction is already present at $m=3$.  One should therefore not infer the integrability of a high-degree generalized KdV equation merely from the existence of one elliptic traveling wave.  Such a wave can arise through a repeated-root degeneration that lowers the genus of a particular energy curve, whereas the PDE-level movable-singularity test probes a much broader local solution structure.

This separation also clarifies the role of Hamiltonian form.  Equation \eqref{eq:ham} exists for all $m$, including $m\ge3$.  Complete integrability requires substantially more structure than one Hamiltonian operator and one family of quadratures.  In the $m=1,2$ sectors that additional structure is supplied by equivalence to KdV/mKdV; beyond the threshold, no such conclusion follows from the present analysis.

\section{Conclusions}
We have organized the polynomial generalized KdV family around three structural diagnostics: WTC singularity balance, Hamiltonian formulation and algebraic-curve genus.  The dominant Painlev\'e exponent is $p=-2/m$, so only the nonlinear degrees $m=1$ and $m=2$ support principal Laurent pole branches.  Their resonance sets coincide with the classical KdV and mKdV patterns, and affine/Galilean transformations establish their complete integrability directly.

The same degrees are precisely those for which a generic traveling-wave first integral defines an elliptic curve.  At $m=3$, the first fractional WTC exponent and the first generic genus-two curve appear together.  This yields a compact threshold principle connecting PDE singularities to the geometry of traveling-wave quadratures.  The Hamiltonian $D_x$ structure, by contrast, survives for every polynomial degree and therefore does not by itself characterize complete integrability.

For the elliptic sectors we derived an explicit root-based Weierstrass formula for a general quartic first integral and linked the Weierstrass discriminant to periodic and degenerate solitary-wave regimes.  The supplied reproducibility files verify the algebraic reduction and a periodic mKdV example independently.

Natural extensions include nonpolynomial characteristic speeds, variable-coefficient deformations that preserve a Painlev\'e branch, and a classification of the exceptional higher-degree parameter manifolds on which the hyperelliptic traveling-wave curve degenerates to genus one.  Such extensions would distinguish accidental elliptic reductions from PDE-level integrability more sharply.

\section*{Declarations}
\textbf{Funding.} No external funding was received for this work.

\textbf{Conflict of interest.} The author declares no conflict of interest.

\textbf{Data availability.} No external datasets were used.  The computational scripts required to reproduce the symbolic checks and numerical figures are included with the manuscript source.

\textbf{Author contributions.} A.H.S.: conceptualization, methodology, formal analysis, software, validation, writing--original draft, writing--review and editing.

\appendix
\section{Resonance-polynomial calculation}
For completeness, we record the algebra behind Proposition 3.2 in the Kruskal gauge $\phi_x=1$.  For $m=1$, set
\[
 u=u_0\phi^{-2}+\epsilon u_r\phi^{r-2}.
\]
At order $\phi^{r-5}$ the linearized dominant terms are
\[
 \kappa(r-2)(r-3)(r-4)u_r
 +a_1u_0(r-4)u_r.
\]
Since $a_1u_0=-12\kappa$, the coefficient factors as
\[
 \kappa(r-4)\{(r-2)(r-3)-12\}
 =\kappa(r+1)(r-4)(r-6).
\]
For $m=2$, with
\[
 u=u_0\phi^{-1}+\epsilon u_r\phi^{r-1},
\]
the coefficient at order $\phi^{r-4}$ is
\[
 \kappa(r-1)(r-2)(r-3)u_r+a_2u_0^2(r-3)u_r.
\]
Using $a_2u_0^2=-6\kappa$ gives
\[
 \kappa(r-3)\{(r-1)(r-2)-6\}
 =\kappa(r+1)(r-3)(r-4).
\]

\section{Quartic-to-Weierstrass algebra}
Let $R=R_4$ and $U=U_0+1/y$, with $R(U_0)=0$.  Taylor expansion is exact because $R$ is quartic:
\[
 R(U_0+y^{-1})
 =\frac{R'}{y}+\frac{R''}{2y^2}+\frac{R'''}{6y^3}+\frac{R''''}{24y^4},
\]
where all derivatives are evaluated at $U_0$.  Since $U'=-y'/y^2$, equation $(U')^2=R(U)$ becomes
\[
 (y')^2=R'y^3+\frac{R''}{2}y^2+\frac{R'''}{6}y+\frac{R''''}{24}.
\]
Writing $A=R'$, $B=R''/2$, $C=R'''/6$, $D=R''''/24$ and
\[
 y=\frac{4w-B/3}{A}
\]
gives
\[
 (w')^2=4w^3+
 \left(\frac{AC}{4}-\frac{B^2}{12}\right)w
 +\frac{A^2D}{16}-\frac{ABC}{48}+\frac{B^3}{216}.
\]
Identifying the last equation with $(w')^2=4w^3-g_2w-g_3$ yields \eqref{eq:g2}--\eqref{eq:g3}.

\end{document}